\newtheorem{assumption}{Assumption}
\begin{document}

\mainmatter              
\title{Value of Communication: Data-Driven Topology Optimization for Distributed Linear Cyber-Physical Systems}
\titlerunning{Data-Driven Communication Topology Optimization}  
%
\author{Michael Nestor\inst{1}\thanks{Corresponding author} \and Fei Teng\inst{1}}
%
%
\tocauthor{Michael Nestor, Fei Teng}
\institute{\textsuperscript{1} Department of Electrical and Electronic Engineering, Imperial College London, UK\\
\email{m.nestor22@imperial.ac.uk}
}

\maketitle              

\begin{abstract}
Communication topology is a crucial part of a distributed control implementation for cyber-physical systems, yet is typically treated as a constraint within control design problems rather than a design variable. We propose a data-driven method for designing an optimal topology for the purpose of distributed control when a system model is unavailable or unaffordable, via a mixed-integer second-order conic program. The approach demonstrates improved control performance over random topologies in simulations and efficiently drops links which have a small effect on predictor accuracy, which we show correlates well with closed-loop control cost.
\keywords{communication topology, data-driven control, cyber-physical systems}
\end{abstract}

\section{Introduction}
%


%
\subsection{Context and Motivation}
The communication topology and its impact on control performance is an emerging topic within the field of distributed control. Distinguished from centralized control, where all local agents must communicate with the central control agent, and decentralized control, where there is no communication at all, distributed control of interconnected dynamical systems has the flexibility to design which agents each agent should send information to or receive information from, even during run time given the increasing reconfigurability of communication solutions, such as software-defined networks. A higher degree of connectivity in the communication graph will improve control performance as each agent has a better picture of the global system, but will lead to increased communication overheads. This paper focuses on how to balance these competing objectives of minimizing communication whilst achieving satisfactory control performance. \par

Distributed control that achieves cooperation through communication has been proposed as offering better performance than decentralized schemes whilst providing improved flexibility, privacy and computational efficiency compared to centralized control \cite{Daoutidis-et-al-2018-Decomposing-Complex-Plants}. Consider the example of power distribution grids; significant growth of distributed energy resources such as electric vehicles will require coordination to provide grid services and avoid violating grid constraints. A distributed control structure mirrors the distributed nature of these resources, with the potential for enhanced flexibility and cooperation between, e.g., virtual power plant operators or prosumer communities \cite{Han-2018-Taxonomy-Evaluation-Distributed-Control-DERs}. \par

There must be a defined scheme for information exchange between agents in order to implement distributed control. All-to-all communication may be optimal from a control perspective, but is likely to lead to prohibitive communication and computation overheads in large-scale, complex systems \cite{Jovanovic-Dhingra-2016-Controller-Architectures-Tradeoffs}. Instead defining agent \emph{neighborhoods} reduces the communication burden, where agents communicate with the agents in their neighborhood; however, determining these neighborhoods is not a trivial task. Poor choices may lead to inefficient communication at best and unsatisfactory control performance at worst. 
The neighborhoods can be determined from a system dynamics model, if a suitably accurate model is available. However, identifying such a model can be expensive and time consuming. Recently, direct data-driven control methods that avoid model identification have gained traction \cite{Coulson-2019-Shallows-of-the-DeePC} \cite{Berberich-2019-DDMPC-Stability-Robustness} \cite{De-Persis-2020-Formulas-Data-Driven-Control}. However, it is unclear, under the data-driven setting, how to design adequate and yet efficient communication topology.

\vspace{-0.19mm}
\subsection{Literature Review}

Typical distributed control problems take the communication topology to be a constraint rather than a design variable; the System Level Synthesis framework \cite{Anderson-et-al-2019-SLS} treats controller sparsity structures as a constraint within the distributed control design problem. Distributed model predictive control (MPC) algorithms such as \cite{Conte-2016-Distributed-Synthesis-Stability-Cooperative-Distributed-MPC-Linear} and data-driven distributed MPC schemes \cite{Kohler-2022-Data-Driven-Distributed-MPC-Coupled-Linear} \cite{Alonso-2022-Data-Driven-Distributed-Localized-MPC} assume a known a-priori set of neighbors for each agent.
State feedback control in \cite{Celi-2023-Distributed-Data-Driven-Control-Network-Sys} guarantees convergence for an arbitrary communication graph but no information is provided concerning the influence of the topology on the convergence rate. \par

Previous works addressing communication topology design for control include sparsity-promoting optimal controllers \cite{Dorfler-et-al-2014-Sparsity-Promoting-Optimal-WAC} and co-design of topology and distributed controller to balance communication costs against performance \cite{Gross-2011-Optimized-Distributed-Control-Network-Topology-Design}. 
Graph-based methods include community detection \cite{Daoutidis-et-al-2018-Decomposing-Complex-Plants} and an optimization framework to find stabilizing control structures \cite{Mosalli-Babazadeh-2022-Stabilizing-Control-Structures-Optimization-Framework}. We refer to \cite{Chanfreut-et-al-2021-Survey-Clustering-Methods} for a detailed survey. These works assume knowledge of the global system model, and provide a basis for this data-driven approach. Recent work \cite{Coraggio-2023-Data-Driven-Design-Complex-Network-Structures} uses data-driven methods for network design; this addresses the underlying physical network structure rather than the communication topology of a cyber-layer controller. 

\vspace{-0.2mm}
\subsection{Main Contributions and Paper Structure}
\vspace{-0.2mm}

We consider a linear time-invariant (LTI) networked system consisting of subsystems coupled through their outputs. The problem we address is to design a communication topology suitable for a distributed control implementation without prior knowledge of a system model. We accomplish this through a centralized optimization problem carried out by a system coordinator, where we collect persistently exciting data from the global system that can be used to directly synthesize a communication topology and a multi-step output predictor using a least-square prediction error method. We frame our work in the multi-agent setting and take a modular approach such that we only tackle the topology design via the coordinator to set up the control design problem for the individual control agents to analyse; we assume that the topology is (re-)designed on a much slower timescale than control decisions are taken on. We co-optimize predictor and topology within a mixed-integer optimization problem to find a topology that balances communication costs and prediction accuracy and is optimal relative to the specified communication link cost weights.
We validate the topology optimization through simulations, showing that prediction cost increases as links are dropped due to greater communication costs, and investigate the connection between the topology design and closed-loop control cost via distributed MPC simulations.

In summary, this paper makes the following original contributions:

- to derive a mixed-integer second-order conic optimization problem (MISOCP) that directly uses data to compute an optimal communication topology and linear predictor and to assess hyperparameter tuning choices.



- to find an upper bound on the prediction error induced by a given topology and the associated error in an open-loop control cost

- to analyse the trade-off between the prediction error and the number of communication links (and the associated communication costs), and the nature of the connection between this trade-off and the realised control cost.

The mathematical formulation of the distributed dynamical system is given in Section \ref{section:formulation}. In Section \ref{section:optimization} we present the mixed-integer conic optimization problem and derive bounds on the prediction error dependent on the chosen topology before simulation results and conclusions are given in Section \ref{section:simulation} and Section \ref{section:conclusion}.



\section{Preliminaries}
\label{section:preliminaries}
We write a $T$-sample long signal composed of the sequence $\{x(k)\}_{k=1}^\top$ in vector form as $x_{[1:T]} = [x^\top(1) \; ... \; x^\top(T)]^\top$. Its Hankel matrix form is defined by: 
\begin{eqnarray*}
    H_N(x) = \begin{bmatrix}
        x(1) & x(2) & \cdots & x(T - N + 1) \\
        x(2) & x(3) & \cdots & x(T - N + 2) \\
        \vdots & \vdots & \ddots & \vdots \\
        x(N) & x(N+1) & \cdots & x(T)
    \end{bmatrix}
\end{eqnarray*}
\begin{definition}[Persistency of Excitation]
    a signal $x_{[1:T]}$ where $x(k) \in \mathbb{R}^n$ is persistently exciting of order $L$ if $\mathrm{rank}(H_L(x)) = nL$.
\end{definition}
We denote the matrix (block-) diagonal with $n$ (block-) diagonal elements $A_i$ by $\mathrm{diag}\{A_i\}_{i=1}^n$, and use the notation that a matrix A:
\allowdisplaybreaks{
\begin{equation*}
    A = \begin{bmatrix}
        A_{11} & \cdots & A_{1 N_2} \\
        \vdots & \ddots & \vdots \\
        A_{N_1 1} & \cdots & A_{N_1 N_2}
    \end{bmatrix}
\end{equation*}
}
may be represented by $A = \{A_{ij}\}_{i=1:N_1, \; j=1:N_2}$, and if $N_1 = N_2$ we can write $A = \{A_{ij}\}_{i,j=1:N_1}$. $\mathbb{I}^{a \times b}$ is the identity matrix of dimension $a \times b$, $\mathbf{0}^{a \times b}$ is the zero matrix of dimension $a \times b$ and $\mathbbm{1}^{a \times b}$ is a matrix of ones of dimension $a \times b$. $\mathbb{Z}_{>0}$ is the set of positive integers.

\section{System Formulation}
\label{section:formulation}

We consider a set of subsystems $\mathcal{V} = \{1,...,M\}$, where $M$ is the total number of subsystems, within a multi-agent cyber-physical system (CPS) framework. Under this approach, we abstract two directed graphs; one for the physical layer, $\mathcal{G}_P = (\mathcal{V}, \, \mathcal{E}_P)$, and a second for the cyber layer, $\mathcal{G}_C = (\mathcal{V}, \, \mathcal{E}_C)$, with $\{ \mathcal{E}_P, \, \mathcal{E}_C \} \subseteq \mathcal{V} \times \mathcal{V}$. The physical layer graph represents the physical subsystems and the interconnections between them, whilst the cyber layer graph represents the control agents (one for each subsystem) and the cyber layer communication between the agents. Therefore, both graphs share the same vertex set but have differing edge sets.
The edge set $\mathcal{E}_P$ is defined by the physical interconnections between subsystems; i.e., $(j,i) \in \mathcal{E}_P$ if the physical state of subsystem $i \in \mathcal{V}$ is directly influenced by subsystem $j \in \mathcal{V}$. Meanwhile, $\mathcal{E}_C$ is defined by the communication links between agents; $(j,i) \in \mathcal{E}_C$ if agent $j \in \mathcal{V}$ sends information to agent $i \in \mathcal{V}$. We write $\mathcal{N}_i^C = \{ j : (j,i) \in \mathcal{E}_C \}$, with $\mathcal{N}_i^C$ denoting the cyber neighborhood of agent $i$ (the set of agents sending information to agent $i$). The Boolean variable $\delta_{ij} \in \{0, 1 \}$ represents whether a communication link exists from agent $j$ to agent $i$:
\begin{eqnarray}
\label{eqn:define_delta}
    \delta_{ij} = 
    \begin{cases}
        1 \ \textrm{if} \ (j,i) \in \mathcal{E}_C \\
        0 \ \textrm{otherwise}.
    \end{cases}
\end{eqnarray}
There is a cost associated with increasing the total number of communication links due to, e.g., bandwidth limitations, and phenomena such as rapid increases in packet loss when the communication channel utilization threshold is breached \cite{Wu-et-al-2020-Towards-Distributed-SDN}. We make a simplifying assumption that each link has a fixed cost.
\begin{assumption}
\label{assumption:comms_costs}
    There is a scalar link cost $c_{ij} \in \mathbb{R}_{\geq 0}$ associated with every possible cyber layer edge $(j,i) \in \mathcal{V} \times \mathcal{V}$.
\end{assumption}

A typical control problem is to determine the control inputs to minimize some cost as a function of the inputs and outputs of all the agents over a (possibly infinite) time horizon, subject to the system dynamics and without leaving a feasible region that inputs and/or outputs must stay within. In the multi-agent setting, each agent calculates its control input with a control function $\kappa_i(\cdot)$:
\begin{eqnarray}
    u_i(k) = \kappa_i(\xi_i(k), \theta_i, \xi_{\mathcal{N}_i^C}(k), \theta_{\mathcal{N}_i^C}),
\end{eqnarray}
where designing $\kappa_i(\cdot)$ is the control design problem, $\theta_i$ is a vector of local problem parameters such as the weights of the cost of $u_i(k)$ and $y_i(k)$, and $\xi_i(k)$ is an extended state vector of recent local inputs and output measurements (i.e., $\xi_i(k) = [u_i^\top(k-1) \; ... \; u_i^\top(k-N_P) \; y_i^\top(k-1) \; ... \; y_i^\top(k-N_P)]^\top$ for some $N_P \in \mathbb{Z}_{>0}$). Alongside local information $\theta_i$ and $\xi_i(k)$, agent $i$ also has information $\theta_{\mathcal{N}_i^C} = \{ \theta_j : j \in \mathcal{N}_i^C \}$ and $\xi_{\mathcal{N}_i^C} = \{ \xi_j(k) : j \in \mathcal{N}_i^C \}$ from its set of neighboring agents $\mathcal{N}_i^C$ that it can use to improve its decision making and reduce the control cost. We frame our approach with the following problem definition:
\begin{problem}
Design the cyber layer edge set $\mathcal{E}_C$ with knowledge of the set of subsystems and agents $\mathcal{V}$, but without knowing the physical layer edge set $\mathcal{E}_P$.
\end{problem}
If two subsystems are physically interconnected, then enabling information sharing between the controlling agents of the subsystems will non-strictly decrease their closed-loop control cost. Stronger interconnections are likely to lead to a greater reduction in cost. Information sharing is enabled by an edge connecting these agents in the cyber layer communication graph.
\par

A distributed LTI system where subsystems are coupled through outputs and outputs are subject to noise may be described by:
\begin{subequations}
\label{eqn:subsys_dynamics}
\begin{eqnarray}
    \label{eqn:subsys_state_dyn}
    x_i(k+1) &=& A_i x_i(k) + B_i u_i(k) + \sum_{j=1}^M {E_{ij} y_j(k)} \\
    \label{eqn:subsys_output_eqn}
    y_i(k) &=& C_i x_i(k) + D_i u_i(k) + v_i(k),
\end{eqnarray}
\end{subequations}
where for subsystem $i \in \mathcal{V}$ and time step $k$, $x_i(k) \in \mathbb{R}^{n_i}$ is the state, $u_i(k) \in \mathbb{R}^{m_i}$ is the input, $y_i(k) \in \mathbb{R}^{p_i}$ is the output and $v_i(k) \in \mathbb{R}^{p_i}$ is zero-mean white noise. The interconnection matrix $E_{ij} = \mathbf{0}^{n_i \times p_j} \; \forall i=j$. The global system dynamics can be written as:
\begin{subequations}
\label{eqn:global_dynamics}
\begin{eqnarray}
    \label{eqn:global_state_dyn}
    x^{(g)}(k+1) &=& A x^{(g)}(k) + B u^{(g)}(k) \\
    \label{eqn:global_output_eqn}
    y^{(g)}(k) &=& C x^{(g)}(k) + D u^{(g)}(k) + v^{(g)}(k),
\end{eqnarray}
\end{subequations}
where the global state, input, output and noise vectors are given by $x^{(g)}(k) = [x_1^\top(k) \; ... \; x_M^\top(k)]^\top$, $u^{(g)}(k) = [u_1^\top(k) \; ... \; u_M^\top(k)]^\top$, $y^{(g)}(k) = [y_1^\top(k) \; ... \; y_M^\top(k)]^\top$ and $v^{(g)}(k) = [v_1^\top(k) \; ... \; v_M^\top(k)]^\top$, respectively. The global state-space matrices in \eqref{eqn:global_dynamics} are given by:
{\allowdisplaybreaks
\begin{subequations}
\begin{eqnarray}
    A &=& \begin{bmatrix} A_{1} & E_{12}C_2 & ... & E_{1M}C_M \\ E_{21}C_1 & A_{2} & ... & E_{2M}C_M \\ \vdots & \vdots & \ddots & \vdots \\ E_{M1}C_1 & E_{M2}C_2 & ... & A_{M} \end{bmatrix}, \
    B = \begin{bmatrix} B_{1} & E_{12}D_2 & ... & E_{1M}D_M \\ E_{21}D_1 & B_{2} & ... & E_{2M}D_M \\ \vdots & \vdots & \ddots & \vdots \\ E_{M1}D_1 & E_{M2}D_2 & ... & B_{M} \end{bmatrix} \  \\
    C &=& \mathrm{diag}\{C_i\}_{i=1}^M, \  
    D = \mathrm{diag}\{D_i\}_{i=1}^M, \ 
\end{eqnarray}
\end{subequations}
}

\begin{assumption}
\label{assumption:LTI_sys}
    The subsystems in the set $\mathcal{V}$ all follow dynamics \eqref{eqn:subsys_dynamics}, and the global physical system made up of the graph $\mathcal{G}_P$ obeys dynamics \eqref{eqn:global_dynamics}.
    The pair $(A,C)$ is observable, and the pair $(A,B)$ is controllable.
    Furthermore, whilst we consider systems obeying the dynamics in \eqref{eqn:global_dynamics}, we assume that the matrices $A_i$, $B_i$, $C_i$ and $D_i$ in \eqref{eqn:subsys_dynamics} are unknown $\forall i \in \mathcal{V}$, and $E_{ij}$ is unknown $\forall j \in \mathcal{V}, \ j \neq i$. 
\end{assumption}

Due to Assumption \ref{assumption:LTI_sys}, we use a central coordinator to generate an input signal $u^{(g),d}_{[1:T]}$ that obeys Assumption \ref{assumption:fundamental_lemma} to excite the dynamics; resulting output measurements allow an understanding of system behavior. The superscript $d$ denotes collected data used in topology design.
\begin{assumption}[\cite{Coulson-2019-Shallows-of-the-DeePC} Lemmas 4.1 and 4.2]
\label{assumption:fundamental_lemma}
    For $\{ T_{\textrm{ini}}, \, N \} \in \mathbb{Z}_{>0}$, the signal $u^{(g),d}_{[1:T]} \in \mathbb{R}^{m^{(g)}T}$ is persistently exciting of order $T_{\textrm{ini}} + N + n^{(g)}$; $m^{(g)} = \sum_{i=1}^M{m_i}$ and $n^{(g)} = \sum_{i=1}^M{n_i}$. $T$ is chosen such that $T \geq (m^{(g)} + 1)(T_{\textrm{ini}} + N + n^{(g)}) - 1 = T_{\textrm{min}}$, and $T_{\textrm{ini}} \geq \ell$, where $\ell$ is the system lag (that is, the smallest integer such that the observability matrix $[C^\top \; (CA)^\top \; ... \; (CA^{\ell - 1})^\top]^\top$ has rank $n^{(g)}$).
\end{assumption}
The corresponding collected output signal is given by $y^{(g),d}_{[1:T]} \in \mathbb{R}^{p^{(g)}T}$, with $p^{(g)} = \sum_{i=1}^M{p_i}$. The collected input and output data may be written as Hankel matrices $U^{(g),d} = H_{T_{\textrm{ini}} + N}(u^{(g),d}_{[1:T]})$, $Y^{(g),d} = H_{T_{\textrm{ini}} + N}(y^{(g),d}_{[1:T]})$. We permute and partition these matrices to end up with four subsystem-ordered data matrices:
\begin{align}
\label{eqn:Hankel_matrices}
    U^P = \begin{bmatrix}
        U^P_1 \\ \vdots \\ U^P_M
    \end{bmatrix}, \qquad
    Y^P = \begin{bmatrix}
        Y^P_1 \\ \vdots \\ Y^P_M
    \end{bmatrix},
    & \qquad U^F = \begin{bmatrix}
        U^F_1 \\ \vdots \\ U^F_M
    \end{bmatrix}, \qquad 
    Y^F = \begin{bmatrix}
        Y^F_1 \\ \vdots \\ Y^F_M
    \end{bmatrix}
\end{align}
where $\forall i \in \mathcal{V}: \ U_i^P = H_{T_{\textrm{ini}}}(u_{i \, [1:T - N]}^d)$, $Y_i^P = H_{T_{\textrm{ini}}}(y_{i \,[1:T - N]}^d)$, \\
$U_i^F = H_{N}(u_{i \, [N+1:T]}^d)$, $Y_i^F = H_{N}(y_{i \,[N+1:T]}^d)$, and the Hankel matrices are dimensioned by $U^P \in \mathbb{R}^{T_{\textrm{ini}} m^{(g)} \times (T - (T_{\textrm{ini}} + N) + 1)}$, $Y^P \in \mathbb{R}^{T_{\textrm{ini}} p^{(g)} \times (T - (T_{\textrm{ini}} + N) + 1)}$, $U^F \in \mathbb{R}^{N m^{(g)} \times (T - (T_{\textrm{ini}} + N) + 1)}$ and $Y^F \in \mathbb{R}^{N p^{(g)} \times (T - (T_{\textrm{ini}} + N) + 1)}$. Note we have separated the first $T_{\textrm{ini}}$ from the remaining $N$ block rows of the Hankel matrices. If there is zero noise, the data matrix $\begin{bmatrix}
    U^{(g),d^\top} & Y^{(g),d ^\top}
\end{bmatrix}^\top$ will span the subspace of trajectories of length $T_{\textrm{ini}} + N$ if it has rank $N m^{(g)} + n^{(g)}$ \cite{Markovsky-Dorfler-2021-Behavioral-Systems-Theory-Analysis-Signal-Processing-Control}. This result forms the cornerstone of the behavioral control methods \cite{Coulson-2019-Shallows-of-the-DeePC} \cite{Berberich-2019-DDMPC-Stability-Robustness} and is aligned with the Subspace Predictive Control (SPC) approach \cite{Favoreel-et-al-1999-SPC}. The SPC methodology is based on subspace identification and makes use of a linear output predictor:
\begin{eqnarray}
\label{eqn:spc_predictor}
    \Hat{y}^{(g)} = K \begin{bmatrix}
        u^{(g)}_{\textrm{ini}} \\ y^{(g)}_{\textrm{ini}} \\ u^{(g)}_f
    \end{bmatrix},
\end{eqnarray}
where $u^{(g)}_{\textrm{ini}} \in \mathbb{R}^{m^{(g)} T_{\textrm{ini}}}$ and $y^{(g)}_{\textrm{ini}} \in \mathbb{R}^{p^{(g)} T_{\textrm{ini}}}$ are the initialization trajectories (most recent $T_{\textrm{ini}}$ inputs and outputs)
$u^{(g)}_{f} \in \mathbb{R}^{m^{(g)} N}$ is the future $N$-step input sequence and $y^{(g)}_{f} \in \mathbb{R}^{p^{(g)} N}$ is the $N$-step output prediction trajectory. $K \in \\ \mathbb{R}^{p^{(g)} N \times (m^{(g)} + p^{(g)})T_{\textrm{ini}} + m^{(g)} N}$ is the predictor matrix. In Section \ref{section:optimization} below, we will examine a standard least-norm optimization problem to determine $K$ before including the influence of the communication topology within the prediction. First, we make a final assumption building on Assumption \ref{assumption:LTI_sys} to ensure the existence of a stabilizing control law for the designed topology.
\begin{assumption}
\label{assumption:no_DFMs}
    The parameters of the global system dynamics defined by \eqref{eqn:global_dynamics} are such that the global system does not have any unstable decentralized fixed modes (DFMs) \cite{Davison-Ozguner-1983-Characterizations-DFMs}, hence there exists a decentralized control law (i.e., when $\delta_{ij} = 0 \ \forall \{ i, j \} \in \mathcal{V}$) that results in a stable closed-loop system response; i.e., each subsystem can stabilize itself despite the coupling with neighbors.
\end{assumption}
Note that each agent $i$ has access to the local information $u_i$ and $y_i$ required for a decentralized control implementation where $u_i(k) = \kappa_i(\theta_i, \xi_i(k))$ without communication, hence Assumption \ref{assumption:no_DFMs} holds if there are no unstable DFMs.

\section{Communication Topology and Predictor Optimization}
\label{section:optimization}

Traditional control approaches consist of a two-stage procedure of first estimating a parametric model of the system to be controlled before designing a controller to stabilize/steer the model according to the desired behavior, often with a third intermediate step to design a state estimator with the model. The SPC approach \cite{Favoreel-et-al-1999-SPC} instead identifies a predictor of future system behavior using least-squares regression, which can then be used in an explicit control law to determine the optimal input sequence.

\subsection{Least-Squares Predictor Identification}

A similar prediction error method (PEM) that is combined with MPC \cite{Huang-et-al-2019-DeePC-Grid-Connected-Power-Converters} finds an optimal linear predictor that is used in the MPC optimization problem; this is formulated as a bi-level optimization. The predictor is found by considering each column in the Hankel matrices \eqref{eqn:Hankel_matrices} as a trajectory; we can use \eqref{eqn:spc_predictor} to make a prediction $\Hat{Y}^F$ of $Y^F$; $\Hat{Y}^F = K \begin{bmatrix}
        U^P \\ Y^P \\ U^F
    \end{bmatrix}$.
The predictor $K$ optimizing the least-squares fitting criterion is found as the minimizer of \cite{Dorfler-2023-Bridging-Direct-Indirect-Data-Driven-Control}:
\allowdisplaybreaks{
\begin{subequations}
\label{eqn:SPC_LS}
\begin{eqnarray}
    &\min_{\Hat{K}} &\left\Vert Y^F - \Hat{K} \begin{bmatrix}
        U^P \\ Y^P \\ U^F
    \end{bmatrix} \right\Vert _F^2 \\
    &\textrm{s.t.}& \Hat{K} = [\Hat{K}_p^u \; \Hat{K}_p^y \; \Hat{K}_f] \\
    && \textrm{rank}([\Hat{K}_p^u \; \Hat{K}_p^y]) = n^{(g)} \\
    && \Hat{K}_f \; \textrm{Lower-Block Triangular},
\end{eqnarray}
\end{subequations}
}
where the rank constraint ensures LTI dynamics of order $n^{(g)}$ and the lower-block triangular constraint ensures causality in predictions. Supposing Assumptions \ref{assumption:LTI_sys} and \ref{assumption:fundamental_lemma} hold, then the Hankel data matrices \eqref{eqn:Hankel_matrices} in \eqref{eqn:SPC_LS} span the trajectory subspace \cite{Coulson-2019-Shallows-of-the-DeePC} and $K$ calculated with \eqref{eqn:SPC_LS} is able to accurately predict an output trajectory according to \eqref{eqn:spc_predictor}. When data is noise-free, an exact closed-form solution exists and is given by $Y^F$ multiplied by the pseudo-inverse of the concatenated remaining Hankel matrices (i.e., $K = Y^F \begin{bmatrix}
    U^{P^\top} & Y^{P^\top} & U^{F^\top}
\end{bmatrix}^{\top^\dagger}$) as the solution to the unconstrained least squares problem.

\subsection{Including the Communication Costs in Predictor Optimization}

We now consider the desire to minimize communication costs defined by Eq. \eqref{eqn:define_delta} and supposing Assumption \ref{assumption:comms_costs} holds; i.e., $\min_{\{ \delta_{ij} \}_{i,j=1:M}} \sum_{i,j=1}^{M} c_{ij} \delta_{ij}$ with the set of Boolean $\delta_{ij}$ variables defining $\mathcal{E}_C$. Consider that if $\delta_{ij} = 0$, there is no link transferring information from agent $j$ to agent $i$, therefore agent $i$'s predictions cannot be calculated using the measurements or decisions of agent $j$. By writing $K$ as:
\begin{eqnarray}
    K = \begin{bmatrix}
        K_{p,11}^u & \cdots & K_{p,1M}^u & K_{p,11}^y & \cdots & K_{p,1M}^y & K_{f,11} & \cdots & K_{f,1M} \\
        \vdots & \ddots & \vdots & \vdots & \ddots & \vdots & \vdots & \ddots & \vdots \\
        K_{p,M1}^u & \cdots & K_{p,MM}^u & K_{p,M1}^y & \cdots & K_{p,MM}^y & K_{f,M1} & \cdots & K_{f,MM}
    \end{bmatrix},
\end{eqnarray}
the communication topology influence on predictor structure is encoded by $\delta_{ij} = 0 \ \implies \ \begin{bmatrix}
        K_{p,ij}^u & K_{p,ij}^y & K_{f,ij}
    \end{bmatrix} = 0$.
We seek to choose the edge set $\mathcal{E}_C$ that minimizes communication costs and least-square prediction error by optimizing over $\delta_{ij} \; \forall \{ i,j \} \in \mathcal{V}$ and $\Hat{K}$ within a multi-criteria mixed-integer program:
\allowdisplaybreaks{
\begin{subequations}
\label{eqn:original_optimization}
\begin{eqnarray}
    \min_{\Delta, \Hat{K}}&& \sum_{i,j=1}^{M} c_{ij} \delta_{ij} + \left\Vert Y^F - \Hat{K} \begin{bmatrix}
        U^P \\ Y^P \\ U^F
    \end{bmatrix} \right\Vert _F^2 \\
    \textrm{s.t.}&& \delta_{ij} \in \{ 0,1 \} \quad \forall \{ i,j \} \in \mathcal{V} \\
    && \Delta = \{ \delta_{ij} \}_{i,j=1:M} \\
    && \Hat{K} = [\Hat{K}_p^u \; \Hat{K}_p^y \; \Hat{K}_f] \\
    \label{eqn:rank_constraint}
    && \textrm{rank}([\Hat{K}_p^u \; \Hat{K}_p^y]) = n^{(g)} \\
    \label{eqn:lbt_constraint}
    && \Hat{K}_f \; \textrm{Lower-Block Triangular} \\
    && \Hat{K}_p^u = \{ \Hat{K}_{p,ij}^u \}_{i,j=1:M}, \ \Hat{K}_p^y = \{ \Hat{K}_{p,ij}^y \}_{i,j=1:M}, \ \Hat{K}_f = \{ \Hat{K}_{f,ij} \}_{i,j=1:M} \  \\
    \label{eqn:implies_constraint}
    && \delta_{ij} = 0 \quad \implies \quad \begin{bmatrix}
        \Hat{K}_{p,ij}^u & \Hat{K}_{p,ij}^y & \Hat{K}_{f,ij}
    \end{bmatrix} = 0 \quad \forall \{i,j\} \in \mathcal{V}, \; i \neq j.
\end{eqnarray}
\end{subequations}
}
In order to solve \eqref{eqn:original_optimization}, we first perform a convex relaxation and drop the rank \eqref{eqn:rank_constraint} and causality \eqref{eqn:lbt_constraint} constraints. Our approach is to increase the data trajectory length $T$ and average over multiple datasets to reduce the effect of noise and implicitly the need for the rank constraint. Ensuring a causal predictor structure is beneficial but not a strict requirement for predictive control, see \cite{Dorfler-2023-Bridging-Direct-Indirect-Data-Driven-Control}; non-causal predictors can still achieve acceptable performance. Next, we perform a convex big-M relaxation \cite{Gross-2011-Optimized-Distributed-Control-Network-Topology-Design} on the implication constraint \eqref{eqn:implies_constraint}; that is, we replace it with:
\allowdisplaybreaks{
\begin{eqnarray}
\left. \begin{array}{c}
    -\overline{M} \delta_{ij} \mathbbm{1}^{p_i N \times m_j T_{\textrm{ini}}} \leq \Hat{K}_{p,ij}^u \leq \overline{M} \delta_{ij} \mathbbm{1}^{p_i N \times m_j T_{\textrm{ini}}} \\
    -\overline{M} \delta_{ij} \mathbbm{1}^{p_i N \times p_j T_{\textrm{ini}}} \leq \Hat{K}_{p,ij}^f \leq \overline{M} \delta_{ij} \mathbbm{1}^{p_i N \times p_j T_{\textrm{ini}}} \\
    -\overline{M} \delta_{ij} \mathbbm{1}^{p_i N \times m_j N} \leq \Hat{K}_{f,ij} \leq \overline{M} \delta_{ij} \mathbbm{1}^{p_i N \times m_j N}
\end{array} \right\} \; \forall \{i,j\} \in \mathcal{V}, \; i \neq j.
\end{eqnarray}}
This relaxation ensures \eqref{eqn:implies_constraint} holds both when $\delta_{ij} = 0$, and when $\delta_{ij} = 1$ assuming that all elements of $\Hat{K}$ are bounded by $-\overline{M} \leq \Hat{K}_{ij} \leq \overline{M}$. The choice of $\overline{M}$ should be as small as possible to improve solver efficiency but large enough that it does not constrain the predictor sub-matrices from giving accurate predictions. We arrive at a mixed-integer quadratic program:
\begin{subequations}
\label{eqn:relaxed_optimization}
    \begin{eqnarray}
        &&\min_{\Delta, \Hat{K}} \sum_{i,j=1}^{M} c_{ij} \delta_{ij} + \left\Vert Y^F - \Hat{K} \begin{bmatrix}
        U^P \\ Y^P \\ U^F
    \end{bmatrix} \right\Vert _F^2 \\
    &&\textrm{s.t.} \ \delta_{ij} \in \{ 0,1 \} \quad \forall \{ i,j \} \in \mathcal{V} \\
    && \qquad \Delta = \{ \delta_{ij} \}_{i,j=1:M} \\
    && \qquad \Hat{K} = [\Hat{K}_p^u \; \Hat{K}_p^y \; \Hat{K}_f] \\
    && \qquad \Hat{K}_p^u = \{ \Hat{K}_{p,ij}^u \}_{i,j=1:M}, \  \Hat{K}_p^y = \{ \Hat{K}_{p,ij}^y \}_{i,j=1:M}, \  \Hat{K}_f = \{ \Hat{K}_{f,ij} \}_{i,j=1:M} \ \  \\
    \label{eqn:implies_constraint_bigM}
    && \! \! \! \! \left. \begin{array}{c}
    -\overline{M} \delta_{ij} \mathbbm{1}^{p_i N \times m_j T_{\textrm{ini}}} \leq \Hat{K}_{p,ij}^u \leq \overline{M} \delta_{ij} \mathbbm{1}^{p_i N \times m_j T_{\textrm{ini}}} \\
    -\overline{M} \delta_{ij} \mathbbm{1}^{p_i N \times p_j T_{\textrm{ini}}} \leq \Hat{K}_{p,ij}^f \leq \overline{M} \delta_{ij} \mathbbm{1}^{p_i N \times p_j T_{\textrm{ini}}} \\
    -\overline{M} \delta_{ij} \mathbbm{1}^{p_i N \times m_j N} \leq \Hat{K}_{f,ij} \leq \overline{M} \delta_{ij} \mathbbm{1}^{p_i N \times m_j N}
\end{array} \right\} \; \forall \{i,j\} \in \mathcal{V}, \; i \neq j,
    \end{eqnarray}
\end{subequations}
which may be represented as an MISOCP. Whilst this can take significant computation to solve for very large systems, we note that the communication topology is designed offline therefore computational efficiency is of reduced importance compared to an online algorithm.
\par

In the case of noisy data, the solution to problem \eqref{eqn:relaxed_optimization} may give very low prediction error due to overfitting to the noise as we have dropped the rank constraint \eqref{eqn:rank_constraint}. A common approach to get around this issue is to collect $N_{\textrm{coll}} \in \mathbb{Z}_{>0}$ datasets using exactly the same input sequence, and then average the Hankel matrices obtained from the $N_{\textrm{coll}}$ experiments \cite{Huang-et-al-2019-DeePC-Grid-Connected-Power-Converters}. Since the noise is assumed to be zero-mean, the effect is to remove the noise from the data. The online measurements used to make predictions (e.g., in \eqref{eqn:spc_predictor}) will also be subject to noise; estimators such as the extended Kalman filter for output-feedback behavioral control \cite{Alpago-et-al-2020-EKF-DeePC} may be used to deal with this issue. Application of such approaches is outside the scope of this work as we focus on optimality of the communication topology and associated predictor and do not address online control. Supposing Assumption \ref{assumption:no_DFMs} holds, then a stabilizing control law exists for the optimized $K$. \par 


\subsection{Bounding the Prediction Error}
\label{subsection:theory}

By considering the case where the communication graph is fully connected, we can deduce a bound on the prediction error when links are dropped. Intuitively, we expect that a graph with more links will lead to a lower prediction error; Theorem \ref{theorem} shows that the topology optimization scheme \eqref{eqn:relaxed_optimization} is consistent with this and that the prediction cost of \eqref{eqn:relaxed_optimization} under the optimized topology is bounded by the cost of the dropped links. We will use this to find an upper bound on the open-loop prediction error induced by dropping links. Adding a constraint that $\delta_{ij} = 1 \; \forall \{i,j\} \in \mathcal{V}, \ i \neq j$ leads to \eqref{eqn:relaxed_optimization} simplifying to an unconstrained optimization, assuming that $\overline{M}$ is sufficiently large:
\begin{eqnarray}
\label{eqn:opt_K_full_comms}
    \min_{\Hat{K}} J(\Hat{K}) = \min_{\Hat{K}} \sum_{i,j=1, \, i \neq j}^{M} c_{ij} + \left\Vert Y^F - \Hat{K} \begin{bmatrix}
        U^P \\ Y^P \\ U^F
    \end{bmatrix} \right\Vert _F^2.
\end{eqnarray}
We denote a minimizer of \eqref{eqn:opt_K_full_comms} by $K'$, and the value the cost function of \eqref{eqn:opt_K_full_comms} evaluated at the minimizer by $J'(K')$. The topology-predictor pair minimizing \eqref{eqn:relaxed_optimization} is given by $(\Delta^*,K^*)$, and the cost function evaluation at this minimizer is denoted as $J(\Delta^*,K^*)$, with $\Delta^* = \{\delta^*_{ij}\}_{i,j=1:M}$. 
\begin{theorem}
\label{theorem}
    Considering the $t$-th column of $Y^F$ as an $N$-step output trajectory $y_f^t$ of \eqref{eqn:global_dynamics} and the $t$-th column of $\Hat{Y}^F$ to be a prediction $\Hat{y}_f^t$ of $y_f^t$ where $1 \leq t \leq T - (T_{\textrm{ini}} + N)  + 1$, the sum of the Euclidean norm of the trajectory prediction errors $\sum_{t=1}^{T - (T_{\textrm{ini}} + N)  + 1}{\lVert y_f^t - \Hat{y}_f^t \rVert_2^2}$ induced by a given optimized topology $\Delta^*$ and predictor $K^*$ is lower bounded by the prediction error with a fully connected communication graph equal to $\left\Vert Y^F \big( I - \Pi \big) \right\Vert _F^2$, where $\Pi = \begin{bmatrix}
    U^P \\ Y^P \\ U^F
\end{bmatrix}^\dagger \begin{bmatrix}
        U^P \\ Y^P \\ U^F
    \end{bmatrix}$, and upper bounded by the sum of this lower bound with the sum of the communication costs of the links not included in the optimized topology, equal to $\left\Vert Y^F \big( I - \Pi \big) \right\Vert _F^2 + \sum_{i,j=1, \, i \neq j}^M{c_{ij} (1 - \delta_{ij}^*)}$.
\end{theorem}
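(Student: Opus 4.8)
The plan is to recognize at the outset that the quantity to be bounded is simply the squared Frobenius residual of the optimized predictor. Writing $Z := \begin{bmatrix} U^P \\ Y^P \\ U^F \end{bmatrix}$ for the stacked data matrix, stacking the squared column norms reproduces the squared Frobenius norm, so $\sum_{t=1}^{T-(T_{\textrm{ini}}+N)+1}\lVert y_f^t - \Hat{y}_f^t \rVert_2^2 = \lVert Y^F - K^* Z\rVert_F^2$. First I would note that $\Pi = Z^\dagger Z$ is the orthogonal projector onto the row space of $Z$, so the unconstrained least-squares problem $\min_{\Hat K}\lVert Y^F - \Hat K Z\rVert_F^2$ attains its minimum $\lVert Y^F(I-\Pi)\rVert_F^2$ at $K' = Y^F Z^\dagger$. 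This is exactly the prediction residual of the fully connected topology once the constant link cost in \eqref{eqn:opt_K_full_comms} is stripped off, which fixes the common reference quantity $\lVert Y^F(I-\Pi)\rVert_F^2$ appearing in both bounds.

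For the lower bound the argument needs only feasibility, not optimality. The predictor $K^*$ is one particular matrix, constrained both by the sparsity pattern induced by $\Delta^*$ and by the big-M box of \eqref{eqn:implies_constraint_bigM}; its least-squares residual can therefore be no smaller than the unconstrained optimum. Hence $\lVert Y^F - K^* Z\rVert_F^2 \ge \min_{\Hat K}\lVert Y^F - \Hat K Z\rVert_F^2 = \lVert Y^F(I-\Pi)\rVert_F^2$, which is the claimed lower bound.

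For the upper bound I would invoke optimality of $(\Delta^*,K^*)$ against a carefully chosen feasible comparison point, namely the fully connected topology $\delta_{ij}=1$ for all $i\neq j$ paired with $K'$. This point is feasible for \eqref{eqn:relaxed_optimization} because, with every off-diagonal $\delta_{ij}=1$ and $\overline{M}$ taken large enough, the constraints \eqref{eqn:implies_constraint_bigM} collapse to the inactive bounds $-\overline{M}\le \Hat K_{ij}\le \overline{M}$, which $K'$ satisfies. Optimality then gives $\sum_{i,j=1}^M c_{ij}\delta_{ij}^* + \lVert Y^F - K^* Z\rVert_F^2 \le \sum_{i,j=1,\,i\neq j}^M c_{ij} + \lVert Y^F(I-\Pi)\rVert_F^2$, where on the right the fully connected cost uses every off-diagonal $\delta_{ij}=1$ and $K'$ realizes the residual $\lVert Y^F(I-\Pi)\rVert_F^2$. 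Rearranging, the fixed diagonal self-loop costs $\sum_i c_{ii}$ cancel on both sides and the off-diagonal terms recombine into $\sum_{i,j=1,\,i\neq j}^M c_{ij}(1-\delta_{ij}^*)$, yielding $\lVert Y^F - K^* Z\rVert_F^2 \le \lVert Y^F(I-\Pi)\rVert_F^2 + \sum_{i,j=1,\,i\neq j}^M c_{ij}(1-\delta_{ij}^*)$ as required.

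The hard part here is not any analytic difficulty but the bookkeeping of the communication-cost terms: the link costs sit inside the objective on both sides of the optimality inequality, and one must track precisely how the full off-diagonal cost on the right minus the optimized cost $\sum c_{ij}\delta_{ij}^*$ on the left reassembles into the cost of the \emph{dropped} links $\sum c_{ij}(1-\delta_{ij}^*)$. A secondary point I would verify is that minimizing the combined objective over $\Hat K$ for the fixed topology $\Delta^*$ coincides with minimizing the pure residual, since $\sum c_{ij}\delta_{ij}^*$ is constant in $\Hat K$, so $K^*$ is genuinely the constrained least-squares minimizer; and that $\overline{M}$ is indeed large enough to keep $K'$ inside the box, which the theorem hypothesizes.
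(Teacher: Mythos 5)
Your proposal is correct and follows essentially the same route as the paper's own proof: the lower bound from the unconstrained least-squares optimum $K' = Y^F Z^\dagger$, and the upper bound by comparing $(\Delta^*,K^*)$ against the feasible fully connected point $(\Delta_1,K')$ and rearranging the link-cost terms. The only nit is that the diagonal terms $c_{ii}\delta_{ii}^*$ do not literally ``cancel'' against anything on the right-hand side; as the paper notes, one argues they vanish because $\delta_{ii}=0$ can always be taken without affecting prediction accuracy.
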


\begin{proof}
    We write $J(\Delta^*,K^*) = \sum_{i,j=1}^{M} c_{ij} \delta^*_{ij} + \left\Vert Y^F - K^* \begin{bmatrix}
        U^P \\ Y^P \\ U^F
    \end{bmatrix} \right\Vert _F^2$.
Since $\Delta_1 = \{\delta_{ij}\}_{i,j=1:M}$ where $\delta_{ij} = 1 \; \forall \{i,j\} \in \mathcal{V}, \ i \neq j$ is a feasible topology for \eqref{eqn:relaxed_optimization}, which has a feasible optimal predictor $K'$ when $\overline{M}$ is sufficiently large, $(\Delta_1,K')$ is a feasible solution of \eqref{eqn:relaxed_optimization}. Therefore $J(\Delta^*,K^*) \leq J(\Delta_1,K') = J'(K')$, so $J(\Delta^*,K^*) - J'(K') \leq 0$:
\begin{eqnarray}
\label{eqn:pred_error_upper_bound_derive}
    \sum_{i,j=1}^{M} c_{ij} \delta^*_{ij} - \sum_{i,j=1, i \neq j}^{M} c_{ij} + \left\Vert Y^F - K^* \begin{bmatrix}
        U^P \\ Y^P \\ U^F
    \end{bmatrix} \right\Vert _F^2 - \left\Vert Y^F - K' \begin{bmatrix}
        U^P \\ Y^P \\ U^F
    \end{bmatrix} \right\Vert _F^2 \leq 0. \ 
\end{eqnarray}
Optimization \eqref{eqn:opt_K_full_comms} is unconstrained and the communication costs do not affect the location of the minimum, hence $K'$ is a global minimizer of $J'(\Hat{K})$ and the least-square prediction error given by $K' = Y^F \begin{bmatrix}
    U^P \\ Y^P \\ U^F
\end{bmatrix}^\dagger$. This naturally leads to:
\begin{eqnarray}
\label{eqn:lower_bound_pred_error}
    \left\Vert Y^F \Bigg(I -  \begin{bmatrix}
    U^P \\ Y^P \\ U^F
\end{bmatrix}^\dagger \begin{bmatrix}
        U^P \\ Y^P \\ U^F
    \end{bmatrix} \Bigg) \right\Vert _F^2 = \left\Vert Y^F \big( I - \Pi \big) \right\Vert _F^2 \leq \left\Vert Y^F - K^* \begin{bmatrix}
        U^P \\ Y^P \\ U^F
    \end{bmatrix} \right\Vert _F^2.
\end{eqnarray}
Considering the Frobenius norm as the sum of the 2-norm of the matrix columns (with each column a trajectory and a prediction respectively in the left and right matrix evaluations inside the Frobenius norm), from \eqref{eqn:pred_error_upper_bound_derive} we find:
\begin{eqnarray}
\label{eqn:pred_error_upper_bound}
    \sum_{t=1}^{T-(T_{\textrm{ini}} + N)+1}{\lVert y_f^t - \Hat{y}_f^t \rVert_2^2} \leq \sum_{i,j=1, \, i \neq j}^M{c_{ij} (1 - \delta_{ij}^*)} + \left\Vert Y^F \big( I - \Pi \big) \right\Vert _F^2.
\end{eqnarray}
Note we can always set $\delta_{ii}=0 \ \forall i \in \mathcal{V}$ without influencing prediction accuracy. Applying the same consideration to \eqref{eqn:lower_bound_pred_error} gives:
\begin{eqnarray}
    \left\Vert Y^F \big( I - \Pi \big) \right\Vert _F^2 \leq \sum_{t=1}^{T-(T_{\textrm{ini}} + N)+1}{\lVert y_f^t - \Hat{y}_f^t \rVert_2^2},
\end{eqnarray}
providing the upper and lower bounds on the total prediction error over all the samples in the collected dataset as desired. Note that the lower bound evaluates to zero when data is noise-free as exact predictions may be found.
    \qed
\end{proof}

Whilst bounds on the overall prediction error induced by a given topology are informative, it is useful to have bounds on the error for an individual trajectory prediction as would be used in solving a control problem. 
\begin{lemma}
\label{lemma}
    Consider a control output penalty term for a regulation problem of the form $\lVert y_f\rVert_Q^2$ where $Q \succeq 0$ is a weighting matrix. The error in the open-loop prediction cost, given by $\lVert y_f - \Hat{y}_f \rVert_Q^2$, is bounded by $\lambda_{Q,\textrm{max}} \big( \sum_{i,j=1, i \neq j}^M{c_{ij} (1 - \delta_{ij}^*)} + \left\Vert Y^F \big( I - \Pi \big) \right\Vert _F^2 \big)$, where $\lambda_{Q,\textrm{max}}$ is the largest eigenvalue of $Q$.
\end{lemma}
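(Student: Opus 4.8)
The plan is to reduce the weighted $Q$-norm prediction error for a single trajectory to the unweighted $2$-norm error through a standard eigenvalue inequality, and then to import the dataset-wide error bound already established in Theorem \ref{theorem}. This keeps the argument short: Theorem \ref{theorem} does all the structural work, so the lemma only needs to recast a single-trajectory error in terms of the aggregate bound and then account for the weighting matrix $Q$.

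First I would exploit positive semidefiniteness of $Q$. Since $Q \succeq 0$ is symmetric, the Rayleigh-quotient inequality gives $z^\top Q z \leq \lambda_{Q,\textrm{max}}\, z^\top z$ for every vector $z$, i.e. $\lVert z \rVert_Q^2 \leq \lambda_{Q,\textrm{max}} \lVert z \rVert_2^2$. Setting $z = y_f - \Hat{y}_f$ converts the open-loop prediction cost error into $\lVert y_f - \Hat{y}_f \rVert_Q^2 \leq \lambda_{Q,\textrm{max}} \lVert y_f - \Hat{y}_f \rVert_2^2$, isolating exactly the eigenvalue factor that appears in the claimed bound.

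Second I would connect the single-trajectory unweighted error to the aggregate error controlled by Theorem \ref{theorem}. Identifying $y_f$ with a column $y_f^t$ of $Y^F$ and $\Hat{y}_f$ with its prediction $\Hat{y}_f^t$, and using that each summand $\lVert y_f^s - \Hat{y}_f^s \rVert_2^2$ is nonnegative, any single term is bounded by the full sum, so $\lVert y_f - \Hat{y}_f \rVert_2^2 \leq \sum_{t=1}^{T-(T_{\textrm{ini}}+N)+1} \lVert y_f^t - \Hat{y}_f^t \rVert_2^2$. By Theorem \ref{theorem} this sum is at most $\sum_{i,j=1, \, i \neq j}^M c_{ij}(1-\delta_{ij}^*) + \lVert Y^F(I-\Pi)\rVert_F^2$. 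Chaining these inequalities with the eigenvalue bound then delivers precisely $\lVert y_f - \Hat{y}_f \rVert_Q^2 \leq \lambda_{Q,\textrm{max}}\big(\sum_{i,j=1, \, i \neq j}^M c_{ij}(1-\delta_{ij}^*) + \lVert Y^F(I-\Pi)\rVert_F^2\big)$.

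The main obstacle, such as it is, lies in the second step rather than the first: the only nontrivial point is justifying that a single open-loop trajectory error may be charged against the dataset-wide error, which relies on viewing the trajectory as one Hankel column and on nonnegativity of each term rather than any sharper per-column estimate. I would flag that this makes the bound conservative, since it effectively attributes the entire topology-induced error to one prediction, but it remains valid for any trajectory produced by the identified predictor, and the eigenvalue step is entirely routine.
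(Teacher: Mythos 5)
Your proposal is correct and follows essentially the same route as the paper's proof: bound a single column's squared $2$-norm error by the dataset-wide sum from Theorem~\ref{theorem} (using nonnegativity of the summands), then pass from the $Q$-weighted norm to the $2$-norm via the largest eigenvalue of $Q$ (the paper does this through an explicit eigendecomposition $Q = V_Q \Lambda_Q V_Q^\top$, which is the same Rayleigh-quotient argument you invoke). The only difference is the order of the two steps, and your remark about conservativeness matches the paper's own comment following the lemma.
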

\begin{proof}
    A worst-case bound on any individual trajectory $y_f$ is found from \eqref{eqn:pred_error_upper_bound} as:
    \begin{eqnarray}
    \label{eqn:worst_case_pred_error_UB}
        \lVert y_f - \Hat{y}_f \rVert_2^2 \leq \sum_{i,j=1, \, i \neq j}^M{c_{ij} (1 - \delta_{ij}^*)} + \left\Vert Y^F \big(I - \Pi \big) \right\Vert _F^2.
    \end{eqnarray}
    We can write $Q = V_Q \Lambda_Q V_Q^\top$ since $Q$ is symmetric, where $V_Q$ is a matrix of eigenvectors of $Q$ and $\Lambda_Q$ is a matrix of its eigenvalues. Let $y_f - \Hat{y}_f = V_Q \Tilde{y}$, then $\lVert y_f - \Hat{y}_f \rVert_Q^2 = \Tilde{y}^\top V_Q^\top V_Q \Lambda_Q V_Q^\top V_Q \Tilde{y} = \Tilde{y}^\top \Lambda_Q \Tilde{y}$ since $V_Q^\top = V_Q^{-1}$. Note that $\lVert y_f - \Hat{y}_f \rVert_2^2 = \Tilde{y}^\top V_Q^\top V_Q \Tilde{y} = \Tilde{y}^\top \Tilde{y}$ and $\Tilde{y}^\top \Lambda_Q \Tilde{y} \leq \lambda_{Q,\textrm{max}} \Tilde{y}^\top \Tilde{y}$, therefore $\lVert y_f - \Hat{y}_f \rVert_Q^2 \leq \lambda_{Q,\textrm{max}} \lVert y_f - \Hat{y}_f \rVert_2^2$. Combining this fact with \eqref{eqn:worst_case_pred_error_UB} gives the result:
    \begin{eqnarray}
    \lVert y_f - \Hat{y}_f \rVert_Q^2 \leq \lambda_{Q,\textrm{max}} \Bigg( \sum_{i,j=1, \, i \neq j}^M{c_{ij} (1 - \delta_{ij}^*)} + \left\Vert Y^F \big( I - \Pi \big) \right\Vert _F^2 \Bigg).
\end{eqnarray}
    \qed
\end{proof}
This upper bound gives the worst-case output prediction weighted according to the control weight matrix $Q$; however, it is likely to be rather conservative. We next present simulation results using \eqref{eqn:relaxed_optimization} where we optimize communication topology and predictor and observe that the open-loop prediction cost error, bounded by Lemma \ref{lemma}, does indeed correlate with closed-loop control cost.

\section{Simulations}
\label{section:simulation}

We use the linearized and discretized swing dynamics as given in \cite{Alonso-2022-Data-Driven-Distributed-Localized-MPC} for topology optimization and control simulations. Each subsystem has dynamics \eqref{eqn:subsys_dynamics}, with $x_i(k) = [\theta_i(k) \; \omega_i(k)]^\top$, $B_i = \begin{bmatrix}
    0 & \frac{1}{m_i}
\end{bmatrix}^\top$, $C_i = \begin{bmatrix}
        1 & 0
    \end{bmatrix}$, $D_i = \mathbf{0}^{p_i \times m_i}$ and the matrices $A_i$ and $E_{ij}$ are given by:
\begin{equation}
\label{eqn:sim_system}
    A_i = \begin{bmatrix}
        1 & \Delta t \\
        -\frac{k_i}{m_i} \Delta t & 1 - \frac{d_i}{m_i} \Delta t
    \end{bmatrix} \quad E_{ij} = {\begin{bmatrix}
        0 \\
        \frac{k_{ij}}{m_i} \Delta t
    \end{bmatrix}},
\end{equation}
where $\theta_i$ and $\omega_i$ represent phase angle and frequency deviations, and $m_i$, $d_i$ and $k_{ij}$ represent inertia, damping and coupling for each subsystem, with the time step given by $\Delta t = 0.2$ and $k_i = \sum_{j=1/i}^M{k_{ij}}$. We choose $M=4$, $m_1 = 1.4$, $m_2 = 0.8$, $m_3 = 0.6$, $m_4 = 0.9$, $d_1 = 0.6$, $d_2 = d_4 = 0.65$, $d_3 = 0.75$, $k_{12} = k_{21} = 1.25$, $k_{23} = k_{32} = 1.2$, $k_{34} = k_{43} = 0.075$, $k_{13} = k_{14} = k_{24} = k_{31} = k_{41} = k_{42} = 0$.
A persistently exciting input signal $u_{[1:T]}^{(g),d}$ was created using a pseudo-randomly generated sequence of zero-mean normally distributed numbers with variance $\sigma_d^2 = 1$ and injected into the test system within MATLAB simulations. The collected output data $y_{[1:T]}^{(g),d}$ and input $u_{[1:T]}^{(g),d}$ form the data Hankel matrices \eqref{eqn:Hankel_matrices} in order to solve \eqref{eqn:relaxed_optimization}. We choose $T_{\textrm{ini}} = 3$, $N = 5$, and $\overline{M} = 5$. Following tuning described below, $T$ and $N_{\textrm{coll}}$ were chosen as $200$ and $50$, respectively. \par
The optimization \eqref{eqn:relaxed_optimization} is solved in MATLAB using GUROBI \cite{gurobi} with the YALMIP toolbox \cite{Lofberg-2004-YALMIP}, and we use an identical link cost $c_{ij} = c \; \forall \{i,j\} \in \mathcal{V}$. We test prediction accuracy and assess if overfitting to noise is occurring by injecting a random signal (each element sampled from a zero-mean normal distribution with variance $\sigma_d^2$) into each subsystem and calculating the mean-squared error (MSE) of predictions compared to true (denoised) outputs. Results for an SNR of $10^3$ averaged over 50 trials are presented in Table \ref{tab:sim_results}, giving the adjacency matrix for $\mathcal{G}_C$, optimization prediction cost and prediction MSE for different communication costs. Links are first dropped between subsystems that are uncoupled before links between coupled subsystems are dropped. Results show that half the links can be dropped before significant prediction errors occur; note that there are $6$ couplings (non-zero $k_{ij}$) out of a possible maximum of $12$. 
\begin{table}[ht]
\caption{Optimization results with an SNR of $10^3$, showing that links are dropped as the communication cost grows, resulting in increasing prediction cost and MSE}
\renewcommand{\arraystretch}{0.8}
\begin{center}
\label{tab:sim_results}
\begin{tabular}{c c c c c}
\hline
\multicolumn{1}{c}{\rule{0pt}{12pt}
                   $c$ \quad }&\multicolumn{1}{c}{\# Links \quad }&\multicolumn{1}{c}{Topology \quad } &\multicolumn{1}{c}{Pred. Cost \quad} & \multicolumn{1}{c}{MSE}
                   \\[2pt]
\hline\rule{0pt}{12pt}
$0.001$  &  12 & $\begin{bmatrix}
    0 & 1 & 1 & 1 \\
    1 & 0 & 1 & 1 \\
    1 & 1 & 0 & 1 \\
    1 & 1 & 1 & 0
\end{bmatrix}$ & $0.31$ & $0.100$ \\
\\
$1$  &   6 & $\begin{bmatrix}
    0 & 1 & 0 & 0 \\
    1 & 0 & 1 & 0 \\
    0 & 1 & 0 & 1 \\
    0 & 0 & 1 & 0
\end{bmatrix}$ & $0.68$ & $0.119$ \\
\\
$20$  & 4 & $\begin{bmatrix}
    0 & 1 & 0 & 0 \\
    1 & 0 & 1 & 0 \\
    0 & 1 & 0 & 0 \\
    0 & 0 & 0 & 0
\end{bmatrix}$ & $16.73$ & $0.457$ \\
\\
$1000$ & $2$ & $\begin{bmatrix}
    0 & 0 & 0 & 0 \\
    0 & 0 & 1 & 0 \\
    0 & 1 & 0 & 0 \\
    0 & 0 & 0 & 0
\end{bmatrix}$ & $1034.71$ & $1.247$ \\
[2pt]
\hline
\end{tabular}
\renewcommand{\arraystretch}{1}
\end{center}
\vspace{-6mm}
\end{table}
Our approach is to reduce the effects of noise by averaging over $N_{\textrm{coll}}$ datasets and using a $T > T_{\textrm{min}}$. We investigate varying these hyperparameters to provide sufficient performance without unduly large computational burdens. Results averaged over $500$ trials based on an all-to-all topology are shown in Figure \ref{fig:MSE_regularizer_tuning}. MSE converges as $T$ and $N_{\textrm{coll}}$ increase; we choose $T=200$ and $N_{\textrm{coll}}=50$ to benefit both topology optimization and data-driven control algorithms. 
\begin{figure}[ht]
    \centering
    \begin{subfigure}[t]{0.49\textwidth}
        \includegraphics[width=\textwidth, height=0.575\textwidth]{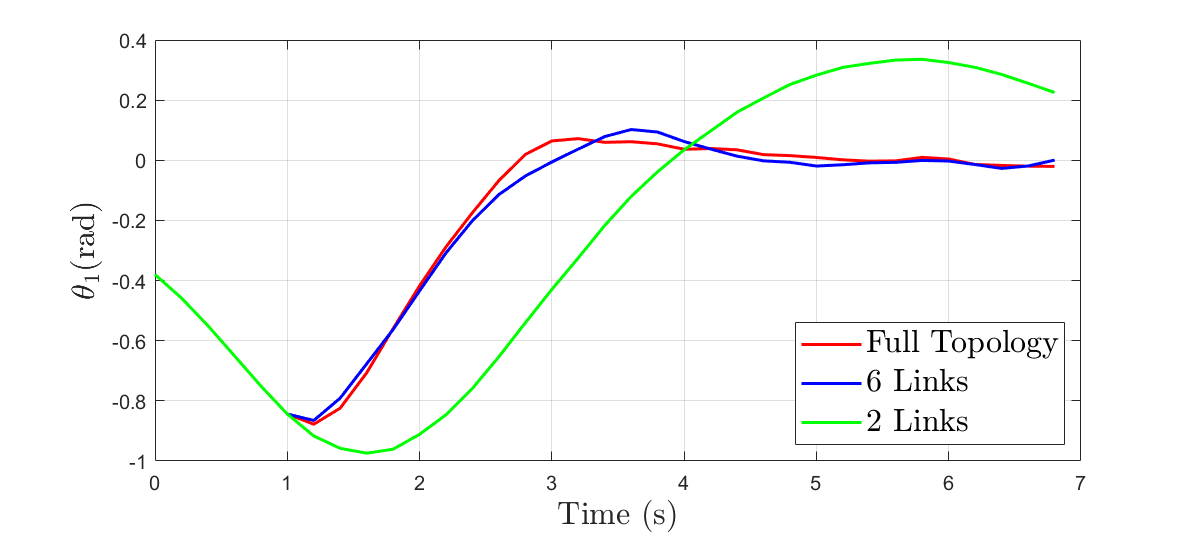}
        \caption{Evolution of $\theta_1$ for all-to-all communication and optimized topologies with $2$ and $6$ links}
        \label{fig:theta4_evolution}
    \end{subfigure}
    \hfill
    \begin{subfigure}[t]{0.49\textwidth}
    \includegraphics[width=\textwidth, height=0.55\textwidth]{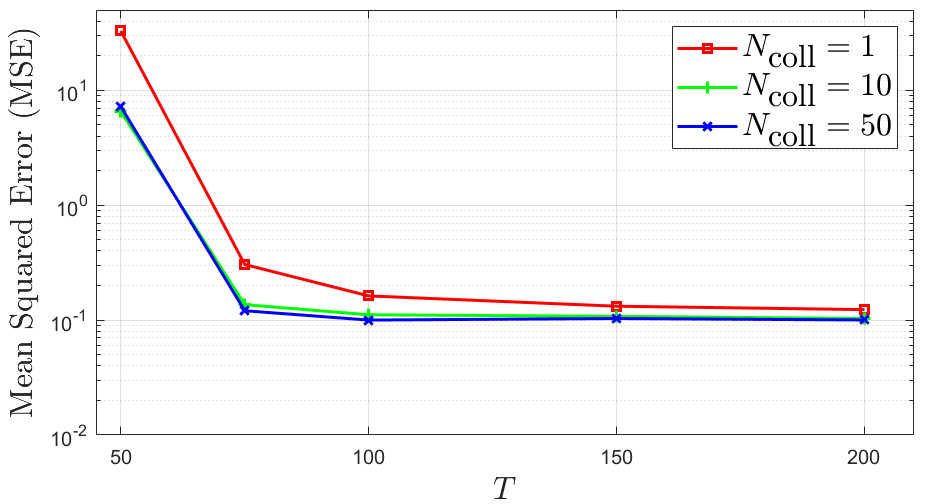}
    \caption{MSE decreases with increasing $T$ and $N_{\textrm{coll}}$ (results for an SNR of $10^3$)}
    \label{fig:MSE_regularizer_tuning}
    \end{subfigure}
    \caption{Simulation results for (a) no noise and (b) tuning $T$ and $N_{\textrm{coll}}$}
\end{figure}
\begin{figure}[ht]
    \centering
    \begin{subfigure}[t]{0.49\textwidth}
        \includegraphics[width=\textwidth, height=0.55\textwidth]{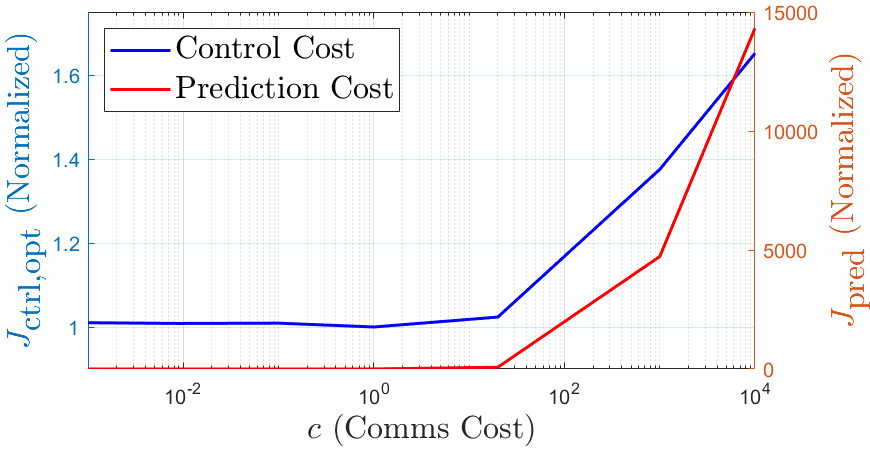}
        \caption{Control cost and prediction cost from topology optimization are closely correlated}
        \label{fig:optimized_ctrl_cost}
    \end{subfigure}
    \hfill
    \begin{subfigure}[t]{0.49\textwidth}
    \includegraphics[width=\textwidth, height=0.55\textwidth]{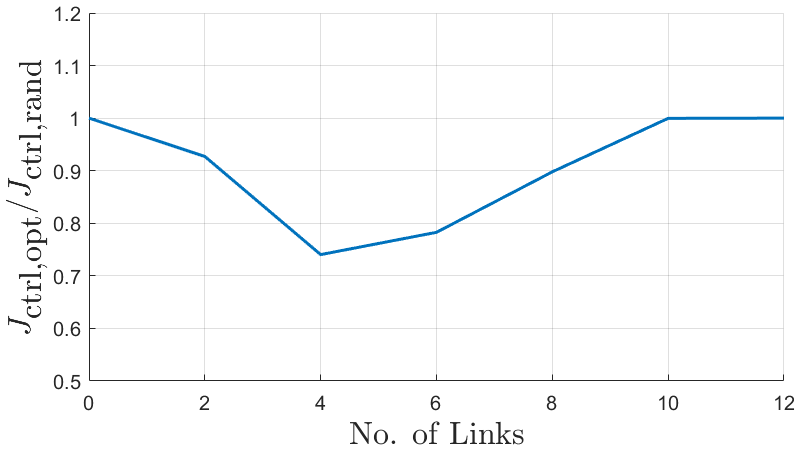}
    \caption{Ratio of control cost for optimized topologies and a random topology with the same number of links}
    \label{fig:control_cost_ratio}
    \end{subfigure}
    \caption{Control simulation results showing (a) control cost $J_{\textrm{ctrl,opt}}$ and prediction cost $J_{\textrm{pred}}$ for optimized topologies and (b) the ratio of $J_{\textrm{ctrl,opt}}$ to random topology control cost $J_{\textrm{ctrl,rand}}$}
    \vspace{-4mm}
\end{figure}

Finally, we investigate the influence of topology choice on control cost and its correlation with prediction cost using test system \eqref{eqn:sim_system}. 
Control is implemented using a non-cooperative data-driven MPC scheme based on \cite{Kohler-2022-Data-Driven-Distributed-MPC-Coupled-Linear} with a minor extension using slack variables, similarly to \cite{Coulson-2019-Shallows-of-the-DeePC}, to guarantee feasibility and stability with measurement noise. Output trajectories for $\theta_1$ shown in Figure \ref{fig:theta4_evolution} for $3$ optimized topologies.
There is a clear correlation between prediction cost of the topology optimization and closed-loop control cost, shown in Figure \ref{fig:optimized_ctrl_cost}, in accordance with our approach to use prediction cost as a metric for topology optimality. Note the significant difference in growth rate of control and prediction costs. The control cost for an optimized topology consistently outperforms a random topology, shown in Figure \ref{fig:control_cost_ratio}; topologies with both $0$ and $12$ links cannot be random so these costs are equal. There is a maximum reduction of around $25$ \% in control cost for the optimized topology compared to the average random topology. This demonstrates the capability of the scheme to find an effective communication topology for distributed control.

\vspace{-1.5mm}
\section{Conclusions and Outlook}
\label{section:conclusion}
\vspace{-1mm}

We have proposed a fully data-driven method for optimizing the communication topology of distributed control schemes via an MISOCP for LTI cyber-physical systems that do not have any unstable DFMs, assuming knowledge of a persistently exciting input sequence of sufficient order. The prediction cost under the optimized topology is upper bounded by the sum of the link costs not included in the topology with the prediction cost under all-to-all communication. Simulations demonstrated that the optimized topology achieves low prediction error up to a threshold where prediction accuracy decreases as further links are dropped. 
Control simulations showed that the prediction cost of the topology optimization is correlated with realized control cost, and demonstrated reductions in control cost using the optimized topology compared to a random topology. Future work aims to improve scalability via distributed optimization, further analyse the connection between prediction cost and closed-loop control cost, and investigate data-driven methods to guarantee stability when unstable DFMs exist.


%
%

%


%

%
%

%
%
%

%

%

\bibliographystyle{splncs03}
\footnotesize{
\bibliography{bibliography.bib}}
%
%







\end{document}